\def \no{\noindent}
\def \dis{\displaystyle}
\def \l{\left}
\def \r{\right}
\def \B{\Big}
\def \beq{\begin{equation}}
\def \eeq{\end{equation}}
\def \bal{\begin{align}}
\def \eal{\end{align}}
\def \ben{\begin{eqnarray*}}
\def \een{\end{eqnarray*}}
\def \m{\medskip}
\def \b{\bigskip}
\def \s{\smallskip}
\def \E{\mathcal{E}}
\def \F{\mathcal{F}}
\def \H{\mathcal{H}}
\def \O{\mathcal{O}}
\def \P{\mathcal{P}}
\def \V{\mathcal{V}}
\newtheorem{defi}{Definition}
\newtheorem{rem}{Remark}
\newtheorem{thm}{Theorem}
\newtheorem{cor}{Corollary}
\newcommand{\bc}{\begin{center}}
\newcommand{\ec}{\end{center}}
\begin{document}

\title{\bf A note on Pr\"{u}fer-like coding and counting forests of uniform hypertrees} 
\author{Christian Lavault\thanks{LIPN (UMR CNRS 7030), Universit\'{e} Paris~13 99, av. J.-B. Cl\'{e}ment 
93430 Villetaneuse (France). \texttt{E-mail: lavault@lipn.univ-paris13.fr}}
}
\date{\empty}
\maketitle

\vskip -.5cm
\begin{abstract}
This note presents an encoding and a decoding algorithms for a forest of (labelled) rooted uniform hypertrees 
and hypercycles in linear time, by using as few as $n - 2$ integers in the range $[1,n]$. 
It is a simple extension of the classical Pr\"{u}fer code for (labelled) rooted trees to an encoding for forests 
of (labelled) rooted uniform hypertrees and hypercycles, which allows to count them up according to their number 
of vertices, hyperedges and hypertrees. In passing, we also find Cayley's formula for the number of (labelled) rooted 
trees as well as its generalisation to the number of hypercycles found by Selivanov in the early 70's.

\s \no \textbf{Key words:} Hypergraph, Forest of (labelled rooted) hypertrees, Pr\"{u}fer code, Encoding-decoding, 
$b$-uniform, Enumeration.
\end{abstract}

\section{Notations and definitions}
A \textit{hypergraph} $\H$ is a pair $\H = (\V,\E)$, where $\V = \{1,2,\,\ldots,n\}$ denotes the set 
of vertices and $\E$ is a family of subsets of $\V$ each of size $\ge 2$ called hyperedges (see e.g.~\cite{Ber73}). 

\no Two vertices are \textit{neighbours} if they belong to the same hyperedge. The \textit{degree} of a vertex 
is the number of its neighbours. A \textit{leaf} is a set of $b - 1$ non-distinguished vertices of degree 
$(b - 1)$ belonging to the same hyperedge.

\no A \textit{hyperpath} (path) between two vertices $u$ and $v$ is a finite sequence 
of hyperedges $e_1,\,\ldots,\,e_k$, such that $e_i\, \cap\, e_{i+1} \neq \emptyset$ for any $1\le i\le k-1$, 
with $u \in e_1$ and $v \in e_k$. 

\m A hypergraph $\H$ is \textit{connected} if there exists a path between any two vertices of $\H$. A connected 
hypergraph is also called a \textit{connected component}, or simply a \textit{component}. 

A hypergraph is called \textit{b-uniform} (or uniform) if every hyperedge $e\in \E$ contains exactly $b$ vertices 
($2\le b\le n$)~\cite{KaL97,RaR06}. For example, 2-uniform hypergraphs are simply graphs. In the present note, 
only connected $b$-uniform hypergraphs ($2\le b\le n$) are considered. \\

\m The \textit{excess} of a connected hypergraph $\H = (\V,\E)$ is defined as 
$$exc(\H) = \sum_{e\in \E} (|e| - 1) - |\V|$$ 
(see e.g.~\cite{KaL97,RaR06}). Thus, if $\H$ is $b$-uniform, its excess is $(b - 1)|\E| - |\V|$. A \textit{hypertree} 
is a component of excess $-1$, which is the smallest excess possible for a connected hypergraph $\H$, 
and hence, $exc(\H)\ge -1$ for any $\H$ (see the above definition). 

\no A component is \textit{rooted} if one of its vertices is distinguished from all others. A hypergraph is called 
a \textit{forest} if all its components have excess $-1$ (i.e. are hypertrees), and similarly a \textit{hypercycle} 
has excess 0.

\section{Bijective enumeration of a forest of hypertrees} 

\subsection{State of the art and motivations}
Concerning connected graphs (2-uniform hypergraphs), there exist several methods for counting trees 
(see e.g.~\cite{CFP07,Cay89,ChW00,Joy81,Knu73,Lab81,Moo67}), including of course the Pr\"{u}fer code. Pr\"{u}fer sequences 
were first introduced by Heinz Pr\"{u}fer to prove Cayley's enumeration formula in 1918~\cite{Pru18}. 
In his very elegant proof~\footnote{Pr\"{u}fer's Theorem is as follows. \textit{There are $n^{n-2}$ sequences 
(called Pr\"{u}fer sequence or Pr\"{u}fer codes) of length $n-2$ with entries being from natural numbers; we establish 
a bijection between the set of trees and this set of sequences.}}, Pr\"{u}fer verified Caley's Theorem~\cite{Cay89} 
by establishing a one-to-one correspondence between labelled free trees of order $n$ and all sequences of $n - 2$ 
positive integers from $1$ to $n$. The Pr\"{u}fer codes can thus be generated by a simple iterative algorithm 
(see also~\cite[vol.~1, chap.~2]{Knu73} and~\cite{FlS09}).
\begin{rem}
To compute the Pr\"{u}fer sequence \textit{Seq(T)} for a labelled tree $T$, iteratively delete the leaf with smallest label 
and append the label of its neighbor to the sequence. After $n - 2$ iterations a single edge remains and we have 
produced a sequence \textit{Seq(T)} of length $n - 2$.
\end{rem}
Since the introduction of Pr\"{u}fer codes, a linear time algorithm for its computation was given 
for the first time only in the 70’s~\cite{NiW78,Ren70}, and has been later rediscovered several times 
in various forms~\cite{CFP07,ChW00,EdG01}. Recently for example, the sequential encoding and decoding schemes 
presented in~\cite{CFP07}. Both require an optimal $\Theta(n)$ time when applied to rooted $n$-node trees, 
and provide the first optimal linear time decoding algorithm for Neville's codes~\cite{CFP07}.

\b Amongst the most recent results, ``Pr\"{u}fer-like'' encoding-decoding algorithms are generalizing the Pr\"{u}fer 
code to the case of hypertrees (uniform or arbitrary). In 2009, S. Shannigrahi, S.P. Pal have shown in~\cite{ShP09} 
that uniform hypertrees can be Pr\"{u}fer-like encoded (and decoded) in optimal linear time $\Theta(n)$, using only 
$n - 2$ integers in the range $[1,n]$ (see Pr\"{u}fer's Theorem in~\cite{Pru18}). 

In the case when hypertrees are arbitrary (non uniform), the same authors' encoding and decoding algorithms in~\cite{ShP09} 
require codes of length $(n - 2) + p $, where $p$ is the number of vertices belonging to more than one hyperedge, 
or \textit{pivots}. Since the number $p$ of pivots is bounded by $|\E|$, at most $(n - 2) + |\E|$ integers are needed 
to encode a general hypertree. Therefore, the design of efficient Pr\"{u}fer-like encoding and decoding algorithms 
can be extended to arbitrary hypertrees where each hyperedge has at least two vertices. Up until now, no better 
bounds are known for the length of Pr\"{u}fer-like codes for arbitrary hypertrees. By contrast, the exact number 
of distinct hypertrees is known to be $\dis \sum_{i=0}^{n-1} \l\{{n-1\atop i}\r\} n^{i-1}$, where 
$\dis \l\{{p\atop q}\r\}$ denotes the Stirling numbers of the second kind~\cite{Knu73}.

\b The main motivation of the present note comes from analytic and bijective combinatorics of hypergraphs, including 
the enumeration of (labelled) rooted forests of uniform hypertrees and hypercycles~\cite{RaR06}. These enumeration results 
are actually tightly linked to Pr\"{u}fer-like coding and decoding of such combinatorial structures. 

The following two algorithms code and decode forests of rooted uniform hypertrees, which in turns allows 
to enumerate these structures bijectively by using a generalization of the Pr\"{u}fer sequences. The knowledge of 
the number of forests of rooted uniform hypertrees provides a concise and simple description of the structures, 
e.g. by using a recursive pruning of the leaves in the forests.

\subsection{Encoding and decoding algorithms for a forest of (labelled) rooted hypertrees}

\begin{defi} \label{def}
The forests $\F$ composed of $(k + 1)$ (labelled) rooted b- uniform hypertrees, with $n$ vertices 
and $s$ hyperedges, is coded with a 4-tuple $(R,r,\P,N)$ defined as follows:
  \begin{itemize} \setlength{\itemsep}{-.1cm} 
	\item $R$ is a set of $(k + 1)$ vertices (roots) with distinct labels in $[n]\equiv \{1,\ldots,n\}$,
	\item $r\in R$ is one the $(k + 1)$ roots,
	\item $\P$ is a partition of $[n]\setminus R$ into $s$ subsets, each of size $(b - 1)$ and
	\item $N$ is an $(s - 1)$-tuple in $[n]^{s-1}$. 
  \end{itemize}
\end{defi}
\no Note that the above positive integers $n\equiv n(s)$, $s$ and $k$ meet the condition 
$$n = s(b - 1) + k + 1,$$ 
and since $\F$ is $b$-uniform, $exc(\F) = s(b - 1) - n$. So, $|\E| = s$ and $|\V| = n$.

\b Algorithm 1 is coding a given forest $\F$ of $k + 1$ (labelled) rooted uniform hypertrees as input, 
and returns the 4-tuple $(R,r,\P,N)$ coding $\F$ as output.

\algsetup{linenosize=\small,linenodelimiter=. 
} 
\begin{algorithm}[t]
  \caption{Encoding a forest of rooted hypertrees.}
  \label{alg1}
  \begin{algorithmic}[1]
    \REQUIRE {\normalsize A forest $\F$ of $k + 1$ (labelled) rooted $b$-uniform hypertrees, with $s$ hyperedges 
    and $n = s(b-1) + k + 1$ vertices.
    \ENSURE  The coding $(R,r,\P,N)$ of $\F$ as in Definition~\ref{def}.
    \STATE $(R,r,\P,N) \leftarrow (\{\mbox{root}\},r,\{\,\},(\,))$
    \REPEAT
    \STATE Add the set of vertices corresponding to the smallest leaf (with respect to the lexicographical order) to the partition $\P$.
    \STATE Put the vertex linking that set into the $(s-1)$-tuple $N$.
    \STATE Take $\F$ as the ``new'' forest not having the vertices corresponding to the smallest leaf.
    \UNTIL {there is no hyperedge remaining in $\F$.}
    \STATE The last vertex in $N$ is necessarily a root and $r$ is redefined as this last vertex.
    \RETURN {$(R,r,\P,N)$.}
    }
  \end{algorithmic}
\end{algorithm}
\no The 4-tuple $(R,r,\P,N)$ is obtained from the coding in Definition~\ref{def} of a forest $\F$ (Algorithm~1) as follows. 
  \begin{enumerate} \setlength{\itemsep}{-.1cm}
	\item the number of its components (i.e. the number $|R|$ of its roots),
	\item the unique root vertex $r$ attached to the leaf of the last hyperedge in the pruning process,
	\item the number of hyperedges $|N| + 1$, and finally, 
	\item the number of hypertrees that are not reduced to their roots, namely the number of distinct roots in the pair $(N,r)$.
  \end{enumerate}
\no \textbf{Example}. \\
The forest $\F = (\V,\E)$ of rooted uniform hypertrees depicted in Fig.~\ref{fig1} is as follows: \\
$\V = \{1, \,2, \,\ldots, \,22\}$ and \\
$\E = \{ \{1,21,22\}, \{2,17,18\}, \{3,13,19\} \{4,8,18\}, \{4,12,14\}, \{6,7,13\}, \{7,20,21\}, \\
\{10,13,15\}, \{11,18,21\} \}$. \\
The roots of the hypertrees are $5, \,9, \,13$ and $16$.

\b \begin{figure}[ht]
    \centering
\includegraphics[width=0.9\textwidth]{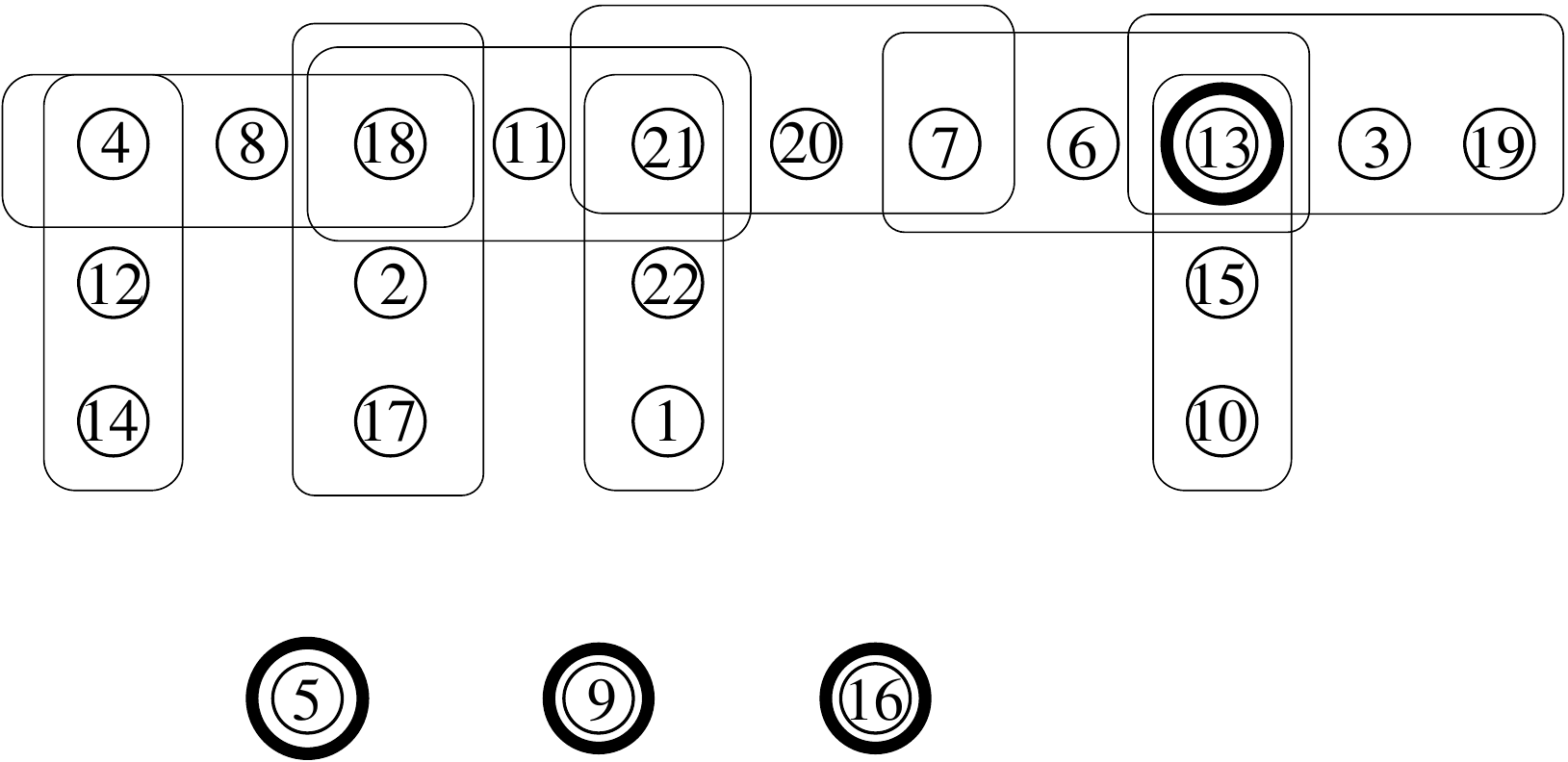}
\m  \caption{A forest of $4$ (labelled) rooted hypertrees.}
    \label{fig1}
\end{figure}

\s \no In this example, Algorithm 1 outputs $(R,\,r,\,\P,\,N)$ s.t. \\
$R = \{5,9,13,16\}$, \\
$r = 13$, \\
$\P = \{ \{1,22\},\{2,17\},\{3,19\},\{4,8\},\{6,7\},\{10,15\},\{11,18\},\{12,14\},\{20,21\} \}$ and \\
$N = (21,18,13,13,4,18,21,7)$.

\b Next, the following Algorithm 2 decodes a given 4-tuple $(R,r,\P,N)$ as input, and returns a forest 
of (labelled) rooted $b$-uniform hypertrees as output.

\algsetup{linenosize=\small,linenodelimiter=. 
}
 \begin{algorithm}[ht]
  \caption{Decoding to a forest of rooted hypertrees.}
  \label{alg2}
  \begin{algorithmic}[1]
    \REQUIRE {\normalsize Integers $n, k$ and $s$ meeting the condition  $n = s(b-1) + k + 1$  
    and the coding $(R,r,\P,N)$ of $\F$, as in Definition~\ref{def}.
    \ENSURE  Forest $\F$ of rooted $b$-uniform hypertrees, whose $k + 1$ roots are the vertices of $R$.
    \REPEAT
    \STATE Build one hyperedge with the first vertex in the $(s-1)$-tuple $N$ and the vertices %
    of the first set of $\P$ (w.r. to the lexicographic order) having no vertex still in the remaining set $N$.
    \STATE Remove the above set from $\P$ and delete the first vertex from the $(s-1)$-tuple $N$. 
    \UNTIL {the set $N$ is empty.}
    \STATE Build the $(s - 1)$-th hyperedge with the last subset of $\P$ and the vertex $r$.
    \RETURN {the forest $\F$ obtained.}
    }
  \end{algorithmic}
\end{algorithm}
\no Within the loop of Algorithm~\ref{alg2}, the forest $\F$ is found with no ambiguity by choosing 
the smallest leaf in the lexicographical order.

\begin{rem} 
Encoding Algorithm~1 and decoding Algorithm~2 are both running in optimal linear time, by using as few as $n - 2$ 
integers in the range $[1,n]$. This complexity result is a direct consequence of the Pr\"{u}fer-like encoding algorithm 
designed in~\cite{ShP09} (see Pr\"{u}fer's Theorem in~\cite{Pru18}). The algorithm computes the \textrm{hyperedge partial 
order} on the hyperedges of the hypertree in linear time by defining a directed acyclic graph (DAG) with vertex set $\E$, 
where each vertex represents a hyperedge of $\H$. (The proof is completed in~\cite[Lemma~2]{ShP09}.)
\end{rem}

\subsection{Enumeration of forests, hypertrees and trees} \label{enum}
From Algorithm~1, we obtain the enumeration of forests with $(k + 1)$ (labelled) rooted uniform hypertrees and $s$ hyperedges.

\begin{thm} \label{forests}
The number of forests having $(k + 1)$ (labelled) rooted b-uniform hypertrees and s hyperedges is
\beq \label{eq:forests} 
{n\choose k+1}\,(k+1)\,\l[\frac{(n - k - 1)!}{s!\,(b-1)!^{s}}\r]\,n^{s-1} %
\,=\; \frac{n!}{k!}\,\frac{n^{s-1}}{s!\,(b-1)!^{s}}\,, 
\eeq
where the number of vertices is $n \equiv n(s) = s(b - 1) + k + 1$.
\end{thm}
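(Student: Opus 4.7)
The plan is to derive the count as a direct consequence of the coding introduced in Definition~\ref{def}: once one knows that Algorithms~1 and~2 set up a bijection between forests $\mathcal{F}$ and the 4-tuples $(R,r,\mathcal{P},N)$ satisfying the stated conditions, counting the 4-tuples is purely a combinatorial exercise using the product rule. So the first step is to assert (and briefly justify) that Algorithm~1 and Algorithm~2 are inverse to one another, so that the map $\mathcal{F}\mapsto(R,r,\mathcal{P},N)$ is a bijection between the set of forests of $(k+1)$ labelled rooted $b$-uniform hypertrees with $s$ hyperedges on $[n]$, and the set of 4-tuples described in Definition~\ref{def}. This bijection rests on the leaf-pruning order being uniquely determined by the lexicographic convention, exactly as in the classical Pr\"ufer construction.

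Next I would enumerate the admissible 4-tuples factor by factor, in the order of the items of Definition~\ref{def}. The set $R$ of $(k+1)$ roots is chosen from $[n]$ in $\binom{n}{k+1}$ ways. The distinguished root $r\in R$ contributes the factor $(k+1)$. The partition $\mathcal{P}$ of the remaining $n-k-1=s(b-1)$ vertices into $s$ unordered blocks of size $b-1$ contributes the multinomial factor
\[
\frac{(n-k-1)!}{s!\,(b-1)!^{s}}\,.
\]
Finally, $N$ is an arbitrary element of $[n]^{s-1}$, contributing $n^{s-1}$. Multiplying these four factors gives precisely the left-hand side of~\eqref{eq:forests}.

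The algebraic identity on the right-hand side of~\eqref{eq:forests} is then a one-line simplification: $\binom{n}{k+1}(k+1)=\frac{n!}{k!(n-k-1)!}$, and absorbing the factor $(n-k-1)!$ from the partition term yields $\frac{n!}{k!}\cdot\frac{n^{s-1}}{s!\,(b-1)!^{s}}$.

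The main obstacle is really the first step, namely confirming that the encoding is a genuine bijection onto the full product set described in Definition~\ref{def}, with no admissibility constraint on $(R,r,\mathcal{P},N)$ beyond the stated cardinalities. Surjectivity requires checking that Algorithm~2, fed with an \emph{arbitrary} 4-tuple of the prescribed shape, always reconstructs a \emph{valid} forest of rooted $b$-uniform hypertrees (so that in particular no cycle is ever created during the reconstruction); injectivity requires that the leaf selected at each step of Algorithm~1 is uniquely recoverable from $(\mathcal{P},N)$ via the lexicographic rule. Both facts follow from the Pr\"ufer-like analysis of Shannigrahi--Pal~\cite{ShP09} cited in the preceding remark, and the fact that the excess $-1$ condition forces the pruning to consume exactly $s-1$ leaves before collapsing the last hyperedge onto the root $r$ recorded separately.
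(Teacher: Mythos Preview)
Your proposal is correct and follows essentially the same approach as the paper: assert the bijection of Algorithms~1 and~2 with the 4-tuples of Definition~\ref{def}, then multiply the counts $\binom{n}{k+1}$, $(k+1)$, $\frac{(n-k-1)!}{s!\,(b-1)!^{s}}$, and $n^{s-1}$ and simplify. If anything, you are more explicit than the paper about why the map is surjective and injective; the paper's own proof simply invokes the one-to-one correspondence from Algorithm~1 without further justification.
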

\begin{proof}
The proof stems directly from the one-to-one correspondence constructed in Algorithm~1, which codes forests $\F$ 
with the set of 4-tuples $(R,r,\P,N)$ defined as in Definition~\ref{def}. Indeed, the number of forests 
of $(k + 1)$ (labelled) rooted $b$-uniform hypertrees and $s$ hyperedges is equal to $|R|\times \# roots\times |\P|\times |N|$. 

Now, we have $\dis |R| = {n\choose k+1}$, the number of roots is $(k +1)$ , $\dis |\P| = \frac{(n - k - 1)!}{s!\,(b-1)!^{s}}$ 
and $|N| = n^{s-1}$. So, after simplifications, Theorem~\ref{forests} follows.
\end{proof}

Setting $k = 0$ in the above Eq.~(\ref{eq:forests}) (Theorem~\ref{forests}) yields $\dis n!\, \frac{n^{s-1}}{s!(b-1)!^{s}}$.
Whence the following

\begin{cor} \label{hypertrees}
The number of (labelled) rooted b-uniform hypertrees with s hyperedges is
$$\frac{(n-1)!}{s!\,(b-1)!^{s}}\,n^{s},$$
where the number of vertices is $n \equiv n(s) = s(b - 1) + 1$.
\end{cor}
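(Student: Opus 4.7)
The plan is to derive Corollary~1 as the direct $k = 0$ specialization of Theorem~\ref{forests}. A single (labelled) rooted $b$-uniform hypertree with $s$ hyperedges is precisely a forest of $k + 1 = 1$ rooted hypertrees in the sense of Definition~\ref{def}, and the vertex constraint $n = s(b-1) + k + 1$ collapses to $n = s(b-1) + 1$, matching the hypothesis of the corollary.

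I would then substitute $k = 0$ into formula~(\ref{eq:forests}). Since $k! = 0! = 1$, the right-hand side reduces to $n!\,n^{s-1}/\l(s!\,(b-1)!^{s}\r)$, which is exactly the expression already displayed in the paragraph preceding the corollary. A one-line algebraic rewrite using $n! = n\,(n-1)!$ converts $n! \cdot n^{s-1}$ into $(n-1)! \cdot n^{s}$, yielding the claimed closed form $\dis \frac{(n-1)!}{s!\,(b-1)!^{s}}\,n^{s}$.

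No serious obstacle is expected: the statement is a pure specialization and algebraic simplification of a result already established bijectively through the encoding of Algorithm~1. The only point that deserves a brief sanity check is that each combinatorial factor in the count of Theorem~\ref{forests} --- namely $|R| = {n \choose 1}$, the unique available root, the partition count $(n-1)!/(s!\,(b-1)!^{s})$, and $|N| = n^{s-1}$ --- remains well-defined and correctly enumerated when $k = 0$; this is immediate from Definition~\ref{def}, so the corollary follows with essentially no additional work beyond the one multiplicative rearrangement.
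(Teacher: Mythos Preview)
Your proposal is correct and follows exactly the paper's own approach: the corollary is obtained by setting $k = 0$ in Theorem~\ref{forests}, reducing Eq.~(\ref{eq:forests}) to $n!\,n^{s-1}/\bigl(s!\,(b-1)!^{s}\bigr)$ and then rewriting via $n! = n\,(n-1)!$. Your additional sanity check on the factors from Definition~\ref{def} is fine but not required.
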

\no Note that, whenever $b = 2$ (and thus $s = n - 1$), Corollary~\ref{hypertrees} is a generalization of Cayley's 
Theorem enumerating rooted trees of $n$ vertices: \textit{for any $n\ge 1$, the number of (nonplane labelled) 
rooted trees of n vertices is} $n^{n-1}$~\cite{Cay89}.

\m One of the advantages offered by a bijective construction proof is also the possibility of performing 
a random generation and learn some characteristic properties of the structures in $\F$~\cite{RaR06}. 

\b A generalization of Subsection~\ref{enum} also gives an explicit expression of the number 
of uniform hypercycles and obtain an alternative proof of Selivanov's 1972 enumeration result. 

\section{Enumeration of uniform hypercycles}
Together with hypertrees, hypercycles are the simplest structures and they have excess 0. 
Along the same lines as in Theorem~\ref{forests}, the following bijective proof gives the enumeration formula 
(first given by Selivanov in~\cite{Sel72}) of the uniform hypercycles in a forest $\F$. 

\begin{thm} \label{hypercycles} \mbox{\cite{Sel72}}\ 
The number of  b-uniform hypercycles with s hyperedges is
$$\l(\frac{(b-1) n!\,n^{s-1}}{2\,(b-1)!^{s}}\r)\; \sum_{j=2}^{s} \frac{j}{s^{j} (s-j)!} %
\;=\; \l(\frac{(b-1) n!\,n^{s-1}}{2\,(b-1)!^{s}}\r)\; \frac{1}{s(s-2)!}\,,$$
where the number of vertices is $n \equiv n(s) = s(b - 1)$.
\end{thm}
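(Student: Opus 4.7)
The plan is to extend the bijective strategy of Theorem~\ref{forests} to hypercycles by extracting, from each hypercycle, its unique fundamental cycle, and then enumerating separately the cycle structure and the attached forest of hypertrees.

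Given a $b$-uniform hypercycle $\H=(\V,\E)$ with $s$ hyperedges and $n=s(b-1)$ vertices, I would first establish that $\H$ contains a unique collection of $j$ hyperedges $e_1,\ldots,e_j$ (with $2\le j\le s$) forming a cycle of length $j$, i.e., there exist distinct ``shared vertices'' $v_1,\ldots,v_j$ with $v_i\in e_i\cap e_{i+1}$ (indices mod $j$). Removing these cycle hyperedges from $\H$ leaves a subhypergraph whose components are $b$-uniform hypertrees rooted at the $j(b-1)$ ``core vertices'' (those appearing in some cycle hyperedge).

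For each fixed cycle length $j$, I would count: (i) the cyclic skeleton, by choosing the $j$ shared vertices cyclically arranged up to rotation and reflection, together with, for each cycle hyperedge, its $b-2$ remaining vertices, while handling the case $j=2$ separately due to its degenerate symmetry; and (ii) the attached forest, by applying Theorem~\ref{forests} with $k+1=j(b-1)$ rooted hypertrees, $s-j$ hyperedges, on all $n$ vertices, with the $j(b-1)$ core vertices prescribed as roots.

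Multiplying (i) and (ii), summing over $j \in \{2,\ldots,s\}$, and simplifying using $(b-1)!=(b-1)(b-2)!$ and $n=s(b-1)$ should yield the claimed prefactor $\tfrac{(b-1)\,n!\,n^{s-1}}{2\,(b-1)!^s}$ times $\sum_{j=2}^s \tfrac{j}{s^j(s-j)!}$. The closed form $\tfrac{1}{s(s-2)!}$ of this sum follows via the index substitution $i=s-j$ from a telescoping identity on $\sum_i s^i/i!$.

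The main obstacle will be rigorously justifying the cycle--forest decomposition as a bijection: uniqueness of the fundamental cycle in a connected excess-$0$ hypergraph, correctly accounting for the cycle's symmetries (especially the edge case $j=2$, where the dihedral action is degenerate), and invoking Theorem~\ref{forests} on the attached forest with a prescribed root set. A secondary difficulty is the careful bookkeeping of factorial and power-of-$n$ factors so that the partial sums collapse into the announced form.
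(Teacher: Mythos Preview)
Your proposal is correct and follows essentially the same route as the paper: stratify hypercycles by the length $j$ of their unique fundamental cycle, decompose each into a smooth cycle on the $j(b-1)$ core vertices together with a forest of $j(b-1)$ rooted $b$-uniform hypertrees on all $n$ vertices with $s-j$ hyperedges, invoke Theorem~\ref{forests} for the forest factor, and simplify using $n=s(b-1)$. Your write-up is in fact more careful than the paper's own argument on the points you flag (uniqueness of the cycle, the dihedral symmetry and the $j=2$ edge case, and the telescoping evaluation of the sum).
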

\begin{proof}
A hypercycle having a cycle length $j$ corresponds to a forest $\F$ of $j(b - 1)$ rooted hypertrees, up to an arrangment 
of the hypertrees in all distinct ways of forming a cycle. $\F$ has $(s - j)$ hyperedges and $j(b - 1)$ components 
to be arranged into a cycle. The number of $b$-uniform hypercycles having a cycle length $j$ ($2\le j\le s$) 
and with $s$ hyperedges is thus
\beq \label{eq:hypercycles}
\l[ {n\choose j(b-1)}j(b-1)\,\frac{\B((s-j)(b-1)\B)!}{(s-j)!\,(b-1)!^{s-j}} \r]\; %
\l[ \frac{1}{2}\, \frac{\B(j(b-1)\B)!}{(b-2)!^{j}} \r],
\eeq
where $n \equiv n(s) = s(b - 1)$. 

\s \no In the above Eq.~(\ref{eq:hypercycles}) indeed, the left-hand factor (between brackets) counts the number 
of forests of rooted $b$-uniform hypertrees, while the right-hand one counts the number of smooth hypercycles 
labelled with the set $\{1,\ldots,j(b-1)\}$. $j$ distinct hyperedges, and thus labelled with the set $\{1,\ldots,j(b-1)\}$. 

\s Now, $(b-1)!^{-s+j} (b-2)!^{-j} = (b-1)!^{j} (b-1)!^{-s}$ and, since $n = s(b-1)$, we have $(s-j)(b-1) = n - j(b-1)$.
So, Eq.~(\ref{eq:hypercycles}) simplifies to $\dis \frac{n!(b-1)}{2\,(b-1)!^{s}}\; \frac{(b-1)!^{j}}{(s-j)!}\,$,
with $j$ ranging from 2 to $s$.

\s Finally, substituting $n^{s-1}/s^{j}$ for $(b-1)!^{j}$ and summing on $2\le j\le s$, gives the finite sum 
in Theorem~\ref{hypercycles}: $\dis \sum_{j=2}^{s} \frac{j}{s^{j} (s-j)!} \,=\, \frac{1}{s(s-2)!}\,$,
and the result follows.
\end{proof} 

\section{Conclusions and further results}
In the above proof of Theorem~\ref{hypercycles} we are led to distinghish hypercycles according to the lengths 
of the cycles. Therefore, a question arises: for a given number $n = s(b-1)$ of vertices, what is the cycle 
length $j$ of the class that contributes most to the number of such hypercycles?

It is shown in~\cite{ShP09} that there exists at most $\dis \frac{n^{n-2} - f(n,b)}{(b-1)^{(b-2)(n-1)/(b-1)}}$ 
distinct labelled $b$-uniform hypertrees, where $f(n,b)$ is a lower bound on the number of labelled trees of maximal 
(vertex) degree exceeding $\dis \Delta = (b-1) + \frac{n-1}{b-1} - 2$. In view of extending this result, can we determine 
a lower bound on the number of labelled trees with no constraint on their maximal (vertex) degree---or, at least, 
with maximal degree exceeding some $\Delta' < \Delta$? This, for example, by designing some generalized counting 
techniques based on a bijective or analytic enumeration of $b$-uniform hypertrees. (See also~\cite{Ren70,ShP09}). 

\m In the spirit of~\cite{FlS09}, some potential applications of Pr\"{u}fer-like code may also arise as fruitful 
directions of research.
 
Encoding algorithms (such as the present one or the algorithm designed in~\cite{ShP09}) can be used to generate 
unique identities (IDs) or PINs. By generating distinct hypertrees with combinatorial enumeration methods, 
it is possible to compute distinct codes for each of these hypertrees using such encoding algorithms. 
All codes generated this way would be distinct and can provide unique IDs. The advantage of the scheme 
is that no check for repetitions is needed, since IDs generated from distinct hypertrees are unique. 
Besides, the generation of such codes requires time proportional to the length of the code. 

Coding schemes can also be useful for allocating IDs to different users in a system where disjoint sets of users 
form different groups. Each group is associated with a distinct hypertree, whereas the users within a group are 
allocated distinct codes of the same hypertree associated with the group. Subgroups can be realized by another 
level of Pr\"{u}fer-like encoding. The actual implementation of such group management schemes is an open direction 
of research (see~\cite{ShP09}).

\end{document}